\documentclass[letterpaper, 10 pt, conference]{ieeeconf}  

\IEEEoverridecommandlockouts                              
\usepackage{amsmath,amssymb,amsfonts,bbm}
\usepackage{graphicx}
\usepackage{algorithm2e}
\usepackage{textcomp}
\usepackage{xcolor}
\usepackage{hyperref}
\DeclareMathOperator*{\argmax}{arg\,max}
\DeclareMathOperator*{\argmin}{arg\,min}
\DeclareMathOperator*{\E}{\mathbb{E}}
\newtheorem{theorem}{Theorem}

\newcommand{\online}[1]{#1}
\renewcommand{\online}[1]{}

\title{\LARGE \bf
Tractable Defense Against Advanced Persistent Threats in Networks}

\author{Brandon Collins, Keith Paarporn, Shouhuai Xu, Philip N. Brown
\thanks{Research was sponsored by the Air Force Office of Scientific Research under award number FA9550-23-1-0171, by the Army Research Office under grant number W911NF-25-1-0239, by NASA under grant number 80NSSC25M7102, and by DoD UC2 program. The views and conclusions contained in this document are those of the authors and should not be interpreted as representing the official policies, either expressed or implied, of the AFOSR, Army Research Office, DoD, NASA, or the U.S. Government. The U.S. Government is authorized to reproduce and distribute reprints for Government purposes notwithstanding any copyright notation herein.}
\thanks{Brandon Collins, Keith Paarporn, and Shouhuai Xu are with the University of Colorado Colorado Springs, Colorado Springs, CO, 80918 USA (email: {\tt bcollin3@uccs.edu})}%
\thanks{Philip N. Brown is with the University of Colorado Colorado Springs, Colorado Springs, CO, 80918 USA (email: {\tt pbrown2@uccs.edu}) and Politecnico di Torino, Italy.}%
}

\begin{document}

\maketitle
\thispagestyle{empty}
\pagestyle{empty}

\begin{abstract}
Recently, the theory of Boolean Dynamical Systems was proposed to study the decision theory surrounding the defense of computer networks against Advanced Persistent Threats (APTs).
Boolean Dynamical Systems naturally capture four first principle primitives of APTs: the stealthy nature of attacks, limited and noisy information from automated systems like intrusion detection systems, lateral movement after the attacker penetrates into the network, and the defender's ability to secure a subset of computers at any time at the loss of resources such as system uptime.
Currently, doing optimal/heuristic control in a computationally tractable manner is not possible because the emergent value function is computationally intractable (with respect to the network size).
To resolve this, we propose a mean-field analysis inspired heuristic value function.
We prove that our proposed heuristic is based on an exact computation of the value function under the assumption that the underlying state estimate distribution maximizes entropy.
We numerically evaluate the quality of our heuristic as parameterized by the degree to which the entropy assumptions are violated. 
\end{abstract}

\section{Introduction}
In recent years, the prevalence of Advanced Persistent Threats (APT) continues to increase.
Broadly, the APT can be described as follows: a resourceful and sophisticated adversary leverages stealthy attacks against a system operator's computer network.
Once the adversary gains a foothold in the network, they conduct \textit{lateral movement}, where an adversary spreads their control 
and \textit{privilege escalation} to increase their ability to exert control over the network.
Once they have established a sufficient foothold, they can often lie dormant, either passively conducting attacks or waiting for an impactful moment to strike.
This process from an adversary scouting the network using \textit{reconnaissance} to delivering a \textit{payload} is well studied in cybersecurity, and many frameworks have been proposed to understand the significant moments in a cyber attack, such as MITRE ATT\&CK \cite{al2024mitre} and Cyber Kill Chain \cite{senatereport2014}.
Given the massive impact that APT-related breaches can have, it is of critical importance that computer network operators (or defenders) have robust tools and decision-making processes to handle these threats.

The APT problem domain is characterized by three unique features.
First, the defender 
has \textit{low-information} meaning that the system operator cannot directly detect adversarial activities in their network, often due to the exploitation of \textit{zero-day} vulnerabilities which are considered to be unknown vulnerabilities and therefore impossible to detect.
Second, the attacker moves freely in a victim network after establishing a footprint, called lateral movement.
Third, automated cyber defenses are widely deployed by network operators (i.e., defenders).
Chief among these defense mechanisms is the \textit{intrusion detection system} (IDS), which monitors either network traffic in a network or the behavior of computers.
These IDS sensors automatically detect suspicious activities and functions as a noisy signal of the true  state of the network.

Recently, it has been recognized that Boolean Dynamical Systems (BDSes) can describe these three characteristics
\cite{kazeminajafabadi2024optimal,kazeminajafabadi2024optimal_journal,collins2025efficient}.
To model APT attacks in a network,
we suppose each node in the network is a computer, and the network connections define the permitted connections between computers, which is an accepted assumption in theoretic cybersecurity research (cf. \cite{XuBookChapterCD2019,XuMTD2020,XuTDSC2012} and the references therein).
Each computer has a boolean state that indicates if the computer is infected or not.
The state dynamics are then defined analogously to the compartmental epidemic equations \cite{pare2020modeling}, where an infected computer can spread the infection to clean connected computers.
The APT-based BDS models differ from the epidemic spreading process in the implementation of how a node recovers; in an APT setting, we say the network operator \textit{cleans} a computer.
Thus, we consider the cleaning of a computer to be a control input, where the cleaning action has an associated cost in lost resources (e.g. lost uptime)

From a control theoretic point of view, this problem can be broken in two parts: (i) an estimation problem
of converting the noisy IDS signals, historic cleaning actions and knowledge of the network structure into an estimate of the current state; and (ii) the control problem of deciding what computers to clean, balancing the cost of cleaning a computer with the security benefits of doing so.
Solving both of these issues is further complicated by the exponential size of the state and control space.
In particular, a network of size $n$ has $2^n$ states, and the network operator can clean any computer at any time, meaning the action space at a single timestep also has $2^n$ possibilities.
Thus, even though control-theoretic techniques have been utilized to solve both problems (as in \cite{kazeminajafabadi2024optimal_journal}), the resulting solutions are computationally intractable, meaning that solutions cannot be realistically computed for large networks.
Here, we use the term \textit{computationally tractable} to differentiate algorithms that run in polynomial versus ones that are \textit{intractable} or run in exponential time, both with respect to the size of the network.
For cybersecurity, tractable algorithms are of critical importance, as solutions need to be computed on large networks in real time.

Thus, in this paper, we aim to provide heuristic and tractable control solutions. 
On the estimation side, the Boolean Kalman Filter (BKF) \cite{braga2011optimal} provides intractable but provably-optimal estimate distributions.
Our recent work in~\cite{collins2025efficient} has shown that the BKF algorithm can be tractably computed under the assumption that the relevant state estimation distribution maximizes entropy in a certain constrained set of possible distributions.
In the present paper, we build on that work towards solving the control portion of the problem. 
We recognize that before control approaches can be investigated, it is first necessary to evaluate different controls and assess which one is \textit{optimal}, or the best control to choose among all options.
Formally, we define a \textit{value function} that provides a numeric evaluation of any possible control.

The core issue that this paper seeks to address is that the emergent value function in this problem domain is defined via (exponentially) large matrix operations that are computationally intractable, rendering existing optimal solutions unusable in practical settings.
In this work, we provide a computationally tractable heuristic value function that enables future work to assess the best approach for optimal control in this problem domain.
To this end, our contributions are as follows:
\begin{enumerate}
    \item We prove that the expected reward for a given state-action pair can be tractably computed.
    \item We prove that the expected reward with respect to a belief distribution over the state space and a selected action can be tractably computed in a single timestep, given that the distribution maximizes entropy.
    \item We leverage the above theoretical results to propose a novel heuristic and tractable value function that evaluates any finite length control sequence.
    \item Using simulations, we numerically evaluate the performance of the proposed heuristic value function with respect to the entropy of the initial distribution.
\end{enumerate}

\section{Model}
\subsection{Model}
We model the spread of computer viruses through a network of computers.
Let the set of computers be $N=\{1,2,\dots,n\}$ for some positive integer $n$.
To represent the state of the network at a time $t\in \mathbb{N}$, we use a vector $x_t\in \{0,1\}^n$ where $x_{t,l}$ represents the state of the computer $l$ at time $t$, where $1\leq l \leq n$.
If $x_{t,l}=1$, we say that computer $l$ is infected at time $t$, and similarly if $x_{t,l}=0$ then computer $l$ is \textit{clean} or \textit{secure}. 
Let $X=[x^1,x^2,\ldots, x^{2^n}]$ be an $n \times 2^n$ matrix where each column $x^i\in\{0,1\}^n$ represents a unique state of the network.
Together, all of the columns $x^i$ of $X$ include all possible states of the network in $\{0,1\}^n$.
Occasionally, it is convenient to abuse notations somewhat and treat $X$ as a set, namely $X=\{x^1,x^2,\dots,x^{2^n}\}$.
We denote a distribution over all $2^n$ states as $\Pi\in\Delta(X)$, where $\Delta(X)=\{\Pi\in[0,1]^{2^n}\mid \sum_{i\in N} \Pi_i =1 \}$ is the space of all distributions over $X$.
Further, suppose that the network enforces a security policy that only permits certain pairs of computers to communicate with each other, effectively leading to a network structure.
We denote this policy as a set $E$, and we say that computer $l$ can propagate an attack to computer $k$ if $(l,k)\in E$.
This communication is directed, so $(l,k)\in E$ does not imply that $(k,l)\in E$.
Additionally, let $\vec{1}$ indicate an appropriately sized column vector of 1's.

To model the spread of APT attacks in the network, infected computers may propagate over permitted communication links.
In particular, if the computer $l$ is infected ($x_{t,l}=1$) then at each timestep it has the probability $\rho_{lk}\in[0,1]$ to propagate the attack to the neighbor node $k$.
That is, if $(l,k)\in E$ and $x_{t,l}=1$, the infection (i.e., attack) will spread through this connection with probability $\rho_{lk}$ at each time step.
For mathematical convenience, we define $\rho_{lk}=0$ anytime $(l,k)\notin E$, and $\rho_{ll}\notin E$ for all nodes $l\in N$ (that is, there are no self-edges).
Additionally, we define the \textit{in-neighbors} of a node $l$ as $D_l=\{k\in N\mid (k,l)\in E\}$.

The APT scenario is typically thought to model \textit{low} or \textit{no} information settings because of their stealthy behavior.
That is, the defender has limited information about the true state of their network at time $t$, or $x_t$.
To capture the situation where the defender has a realistic amount of information, we assume that the defender has IDS enabled on each computer.
These IDS systems produce periodic and noisy observations of the true state of the network at any time $t$, or $x_t$.
Particularly, we assume they produce observation $y_t\in \{0,1\}^n$ at each time step, where $y_{t,l}=1$ indicates that the IDS flagged computer $l$ being infected at time $t$, and similarly $y_{t,l}=0$ indicates that the computer is considered clean by the IDS.
The following equation gives the rates at which the IDS gives false negatives and false positives, parameterized by $p,q\in [0,1]$:
\begin{equation}
y_{t,l}=\begin{cases}
        1 \mbox{ if }x_{t,l}=1 & \mbox{with probability }p\\
        0 \mbox{ if }x_{t,l}=1 & \mbox{with probability }1-p\\
        0 \mbox{ if }x_{t,l}=0 & \mbox{with probability } q\\
        1 \mbox{ if }x_{t,l}=0 & \mbox{with probability }1-q\\
    \end{cases}
\end{equation}

Finally, given this noisy information, the defender may choose to clean any subset of computers at each time step.
We notate the defender's actions as $a_t\in\{0,1\}^n$ where $a_{t,l}=1$ indicates the defender's actions.
If a defender attempts to clean an infected computer, we assume that there is a chance the cleaning will fail $\alpha\in[0,1]$, which means that the cleaning operation fails to change the state of a compromised computer to a secure computer, with probability $\alpha$.

Given this, we can now give the true dynamics of the system.
That is, the probability of a node $l$ being infected at time $t+1$ or in $x_{t+1}$ is given by 
\begin{equation}\label{eq:xt dynamics}
\begin{aligned}
    \Pr(x_{t+1,l}=1\mid x_t,a_t)&=
    (1+(\alpha-1) a_{t,l})\bigg(x_{t,l}
    +\\
    &(1-x_{t,l})\bigg[1-\prod_{k\in D_l}(1-\rho_{kl}x_{t,k})\bigg]  \bigg).
\end{aligned}
\end{equation}
The above equation encodes the network spreading dynamics given the network structure and current state.
When combined with the past sequences of observations $\vec{y}=[y_1,y_2,\dots,y_{t}]$ and cleaning inputs $\vec{a}=[a_1,a_2,\dots,a_{t}]$, it has been shown that the BKF optimally estimates the true state $x_t$ \cite{braga2011optimal}.

Thus, we can leverage \eqref{eq:xt dynamics} to produce estimates of the true state $x_t$.
To evaluate any state estimate technique, we use the least squares criterion, given as follows:
\begin{equation}
    \hat{x}^*_t\in \argmin_{\hat{x}_t\in \Psi} \mathbb{E}(||x_t-\hat{x}_t(a_{0:t-1},y_{1:t})||^2_2\mid a_{0:t-1},y_{1:t}),
\end{equation}
where $\Psi$ is the space of all estimators.
It is worth emphasizing that the BKF solution is computationally intractable on large networks.
Notably though, in addition to producing an estimate $\hat{x}_t\approx x_t$, in doing so it keeps track of a running distribution $\Pi_t\in \Delta(X)$.
Using this distribution, we can calculate the expected value of future actions.
Before showing that, we develop the transition matrix that enables the calculation of $\Pi_{t+1}$ from $\Pi_t$ given the knowledge of the network structure \eqref{eq:xt dynamics}.

First, we rewrite \eqref{eq:xt dynamics} as the probability node $l$ gets infected in the next time step given that the current state is $x^j\in X$ and some action $a_{t-1,l}$ is taken:
\begin{equation}
\begin{aligned}
    \eta^j_l(a_{t-1,l})=&\bigg(1+a_{t-1,l}(\alpha-1)\bigg)\\
    &\quad\bigg(x^j_l+(1-x^j_l)\bigg[1-\prod_{r\in D_l}(1-\rho_{rl}x^j_r)\bigg]\bigg).
\end{aligned}
\end{equation}
We leverage $\eta^j_l$ to define the column-stochastic matrix $M_t$, which is the transition matrix that defines the hidden Markov chain over the $2^n$ states.
Specifically, the entry $(M_t)_{ij}$ is the probability that the state $x^j$ will transition to $x^i$, namely:
\begin{equation}\label{eq:def Mt}
\begin{aligned}
    (M_t)_{ij}&=\Pr(x_t=x^i\mid x_{t-1}=x^j,a_{t-1}) \\
    &=\prod^n_{l=1}\bigg(\eta^j_l x^i_{t,l}+(1-\eta^j_l)(1-x^i_{t,l}) \bigg).
\end{aligned}
\end{equation}
Using this transition matrix, we can now calculate expected future distributions given an action sequence.
Using this, we want to calculate the \textit{expected reward} of a given action sequence.
To do this, we introduce a \textit{reward matrix}, given by:
\begin{equation}
    R_{ij}(a)=-\sum_{l\in \{1,2,\dots,n\}}(x_l^i+ca_l).
\end{equation}
Similarly to the transition matrix, $R_{ij}(a)$ is the reward received when selecting action $a$ in state $x^j$ and moving to a new state $x^i$.
If the network operator cleans the node $l$, this operation incurs an immediate cost $c>0$, but they will receive improved future rewards because the cleaned computer $l$ is less compromised in expectation, reducing the penalty for compromised computers in the future.
Further, if the computer $l$ is cleaned, then its chances of spreading infection in the future are greatly reduced, creating a balancing act between the short term cleaning cost and the longer term network effects of a cleaner network.
To aggregate the reward over time, we use a \textit{discounted value function}, given by:
\begin{equation} \label{eq:exact value}
 V(\vec{a})=\sum_{t=0}^T \bigg(\gamma^t\sum_{x^j\in X}\Pi^j_t R(a_t)\odot M(a_t)^T  \vec{1} \bigg)
\end{equation}
where the \textit{discount factor} $\gamma\in (0,1)$ balances the short term reward with discounted future rewards.
The term in the sum can be regarded as the expected reward given the current distribution $\Pi_t$ and the transition probabilities incurred by the action $a_t$.
Ultimately, the goal of this line of work is an algorithm that can produce optimal or near-optimal control sequences $\vec{a}^*$ for any initial distribution $\Pi_0$ that maximizes the discounted reward function.
Formally, this objective is given by:
\begin{equation}\label{eq:objective}
    a^*(\Pi_0)\in \argmax  V(\vec{a}).
\end{equation}
This problem, where the current state is unknown but noisy information is available, and the operator must make decisions under uncertainty with respect to the present state, is known as a \textit{Partially Observable Markov Decision Problem} (POMDP).

\section{Estimating The Value Function}
\subsection{A Theoretical Approach}

Even in the case where there is no exponential state/action space,
POMDP's are computationally difficult to solve \cite{papadimitriou1987complexity}.
One thread of research to resolve this is to solve POMDPs heuristically \cite{hauskrecht2000value} in a tractable fashion.
Unfortunately, even these heuristics are built on the assumption that the value function itself can be computed in a computationally tractable manner.
Currently, the value function \eqref{eq:exact value} requires matrix computations of order $2^n$, which means it is itself an intractable calculation.
This implies that any heuristic algorithm that attempts to evaluate the value function itself will become computationally intractable.
This is the core motivation for the present work, creating a tractable estimate of the value function to enable the use of heuristic algorithms to solve this problem.

We break evaluating the inner term of \eqref{eq:exact value} into two parts, first we consider the situation where we want to evaluate the reward function given an action $a$ and a current state $x^j\in X$, given by:
\begin{equation}
    L^j(a)=\E_{i\sim M_{j}(a)}(R_{ij}(a))=\sum_{i\in X}(M_{ij}(a)R_{ij}(a)).
\end{equation}
This computation is intractable with respect to the network size, because it requires a $2^n\times 2^n$ matrix computation as well as summing $2^n$ terms.

We now give our first result, which gives a method for a tractable calculation of the reward, given that the source state $j$ and the action $a$ are known with certainty.
\begin{theorem} \label{Thm:expected reward}
    The expected reward given the present state is $x^j$ and some action $a\in \{0,1\}^n$ can be computed as:
    \begin{equation}\label{eq:L def}
       L^j(a)=-\sum_{k\in N}(\eta^j_k+ca_k).
    \end{equation}
\end{theorem}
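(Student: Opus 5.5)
The plan is to use linearity of expectation over the $n$ nodes, together with the fact that $M_{ij}(a)$ is a product distribution. By the definition of $R_{ij}(a)$, the reward does not depend on $j$ and is additive across nodes:
\begin{equation*}
R_{ij}(a)=-\sum_{l\in N}x^i_l-c\sum_{l\in N}a_l .
\end{equation*}
Substituting this into $L^j(a)=\sum_{i}M_{ij}(a)R_{ij}(a)$ and swapping the finite sums gives
\begin{equation*}
L^j(a)=-\sum_{l\in N}\Big(\sum_{i}M_{ij}(a)\,x^i_l\Big)-c\sum_{l\in N}a_l\sum_{i}M_{ij}(a).
\end{equation*}

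The second term is immediate. $M(a)$ is column-stochastic, being a product over $l$ of factors $\eta^j_l x^i_l+(1-\eta^j_l)(1-x^i_l)$, and summing this over all $x^i\in\{0,1\}^n$ factorizes as $\prod_l(\eta^j_l+1-\eta^j_l)=1$. Hence $\sum_i M_{ij}(a)=1$, and the action-cost term reduces to $-c\sum_l a_l$.

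The main step is to show that $\sum_i M_{ij}(a)x^i_l=\eta^j_l$, i.e., that the marginal probability that node $l$ is infected after the transition from $x^j$ equals $\eta^j_l$. I would again exploit the product form of \eqref{eq:def Mt}. Split the sum over $x^i\in\{0,1\}^n$ into a sum over $x^i_l$ and a sum over the remaining coordinates. The factor $x^i_l$ kills the $x^i_l=0$ terms, and the $x^i_l=1$ term contributes $\eta^j_l$ times
\begin{equation*}
\prod_{k\neq l}\sum_{x^i_k\in\{0,1\}}\big(\eta^j_kx^i_k+(1-\eta^j_k)(1-x^i_k)\big),
\end{equation*}
and each factor in this product equals $1$. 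This is the only place where care is needed: the $\eta^j_k$ depend only on the source state $x^j$ and on $a$, not on $x^i$, which is what justifies the factorization. Equivalently, given $x^j$ and $a$, the next-state coordinates are independent Bernoulli$(\eta^j_k)$ variables, so the expected number of infected nodes is $\sum_k\eta^j_k$.

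Combining the two terms yields $L^j(a)=-\sum_{k\in N}(\eta^j_k+ca_k)$, which is \eqref{eq:L def}. Each $\eta^j_k$ costs $O(|D_k|)$ to evaluate, so the whole expression is computable in $O(n+|E|)$ time, which confirms tractability.
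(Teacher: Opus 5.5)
Your proposal is correct and follows essentially the same route as the paper: you substitute the additive reward, use $\sum_i M_{ij}(a)=1$ to pull out the cost term, and compute the marginal $\sum_i M_{ij}(a)x^i_k=\eta^j_k$ by restricting to states with $x^i_k=1$, factoring out $\eta^j_k$, and observing that the remaining sum equals one. Your explicit factorization $\prod_{k\neq l}(\eta^j_k+1-\eta^j_k)=1$ supplies the justification for the step the paper states only as ``a sum over a distribution.''
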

\begin{proof}
Beginning with equation

\begin{equation} \label{proof:eq:exp R pt1}
\begin{aligned}
    &\sum_{i\in X}\bigg(M_{ij}(a)R_{ij}(a)\bigg)= \sum_{i\in X}\bigg(M_{ij}(a)(-\sum_{k\in N}(x_k^i+ca_k)\bigg) \\
    &=\sum_{i\in X}\bigg(M_{ij}(a)\sum_{k\in N}(-x_k^i)\bigg)        +\sum_{i\in X}\bigg(M_{ij}(a)\sum_{k\in N}(-ca_k)\bigg)\\
    &=\sum_{i\in X}\bigg(M_{ij}(a)\sum_{k\in N}(-x_k^i)\bigg)+\sum_{k\in N}(-ca_k)
\end{aligned}
\end{equation}
where the first equality is the distributive law, the second is a factoring of $\sum_{k\in N}(-ca_k)$ as it does not depend on $i$, and acknowledges that $\sum_{i\in X}M_{ij}(a)$ is a sum over a distribution.
Next, we focus on the left term in the final equality:
\begin{equation}
\begin{aligned}
    &\sum_{i\in X}\bigg(M_{ij}(a)\sum_{k\in N}(-x_k^i)\bigg) \\
    &=-\sum_{i\in X}\sum_{k\in N}\bigg(x_k^i \prod_{l\in N}\big(\eta^j_l x^i_{l}+(1-\eta^j_l)(1-x^i_{l}) \big)\bigg) \\
    &=-\sum_{k\in N}\sum_{i\in X}\bigg(x_k^i \prod_{l\in N}\big(\eta^j_l x^i_{l}+(1-\eta^j_l)(1-x^i_{l}) \big)\bigg) \\
    &=-\sum_{k\in N}\sum_{i\in X^1_k}\bigg( \prod_{l\in N}\big(\eta^j_l x^i_{l}+(1-\eta^j_l)(1-x^i_{l}) \big)\bigg) \\
    &=-\sum_{k\in N}\sum_{i\in X^1_k}\bigg(\eta^j_k \prod_{l\in N\setminus \{k\}}\big(\eta^j_l x^i_{l}+(1-\eta^j_l)(1-x^i_{l}) \big)\bigg) \\
    &=-\sum_{k\in N} \eta^j_k \sum_{i\in X^1_k}\bigg( \prod_{l\in N\setminus \{k\}}\big(\eta^j_l x^i_{l}+(1-\eta^j_l)(1-x^i_{l}) \big)\bigg) \\
    &=-\sum_{k\in N} \eta^j_k. 
\end{aligned}
\end{equation}
The sums are justified as follows: the first equality follows by definition of $M_{ij}(a)$, and distributes the product into the sum over $N$; the third swaps the sums as there is only one term inside the inner sum; the fourth introduces set $X^1_k=\{i\in X\mid x^i_k=1\}$, recognizing $x^i_k=1$ is required for the term inside to be non-zero; the fifth factors out $\eta^j_k$ as $i\in X^1_k$ implies $\eta^j_k$ is in the product term; the sixth factors $\eta^j_k$ out of the sum over $X^1_k$, recognizing it has no dependence on variable $i$; and the final equality follows as the sum $X^1_k$ is now a sum over a distribution.
The proof concludes by replacing the above equality into
\eqref{proof:eq:exp R pt1}, obtaining 
\begin{equation}
    \sum_{i\in X}\bigg(M_{ij}(a)R_{ij}(a)\bigg)=-\sum_{k\in N}(\eta^j_k+ca_k)
\end{equation}
as desired.
\end{proof}

The above result enables us to compute the expected value of an action $a$ given a state $x^j$.
This computation is essential for the perfect information version of the problem.
However, in the incomplete information version, we must calculate the expected reward of an action given some distribution $\Pi\in \Delta(X)$; in practice, $\Pi$ is generated from the BKF algorithm.
Unfortunately, $\Pi$ is a probability distribution with $2^n$ elements, so direct computations of expectations are generally not tractable.
To resolve this, instead, we base our current state estimate on a \textit{marginal probability} vector $P_t\in [0,1]^n$.
We term $P_t$ the marginal probability, as given any distribution $\Pi_t$, we can calculate it via $X\Pi_t=P_t$.
Assuming that we do not have access to $\Pi_t$ (or the computational ability to compute $X\Pi_t)$, we can instead before updates on $P_t$ based on the so called \textit{mean field analysis} equation:
\begin{equation}\label{eq:MFA}
\begin{aligned}
P_{t,l}&=(1+(\alpha-1)a_{t,l}) \\
&\bigg( P_{t-1,l}+(1-P_{t-1,l})\bigg[1-\prod_{k\in D_l} (1-\rho_{kl}P_{t-1,k})\bigg]\bigg).    
\end{aligned}
\end{equation}
Here, where $P_{t,l}$ is the probability that computer $l$ is compromised.
Notably, to generate this equation we simply replaced $x_t$ or the true state with the expected value that the node is actually compromised.
The critical difference between \eqref{eq:MFA} and \eqref{eq:xt dynamics} is that \eqref{eq:MFA} only needs to be evaluated $n$ times to compute the vector $P_t$, whereas \eqref{eq:xt dynamics} needs to be computed $2^n$ times for each state $x^i\in X$.

We say a distribution $\Pi$ is \textit{consistent} with information $P$ if $\Pi\in S(P_t)=\{\Pi'\in \Delta(X)\mid X\Pi'=P_t\}$.
Then, among the distributions that are consistent with $P$, we are interested in the distribution that maximizes the entropy.
Formally, this is given by:
\begin{equation}\label{eq:Pistar def}
    \Pi^*(P)\in\argmax_{\Pi\in S(P)} \sum_{i}-\Pi_i\ln \Pi_i.
\end{equation}
Previous work \cite{collins2025efficient} (Theorem 1) provides a closed-form representation of such entropy maximizing distributions,
\begin{equation}\label{eq:max entropy}
        \Pi^*(P)_i=\prod_{l\in V} ( P_lx^i_l+(1-x^i_l)(1-P_l) ).
\end{equation}
Using these entropy maximizing distributions, we now provide a result to evaluate the expected cost of an action given an entropy maximizing distribution $\Pi^*(P)$.
Specifically, we show that if the distribution $\Pi$ maximizes entropy among all distributions consistent with some $P$ (that is, $\Pi=\Pi^*(P)$), then the value function can be computed exactly in a tractable fashion.

\begin{theorem} \label{thm:expected value}
Let $a\in\{0,1\}^n$ be an action and $P_t\in[0,1]^n$ be the present information, the given distribution $\Pi^*(P_t)$ can be computed in polynomial time via
    \begin{equation} \label{eq:MFA value approx}
        \E_{j\sim \Pi^*(P_t)}\big(L^j(a)\big)=-\sum_{l\in N}(P_{t+1,l}+ca_l).
    \end{equation}
\end{theorem}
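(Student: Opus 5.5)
The plan is to reduce the statement to Theorem~\ref{Thm:expected reward} by linearity of expectation, and then use the product structure of the entropy maximizing distribution \eqref{eq:max entropy} to evaluate the expectation of each $\eta^j_k$ exactly. Throughout, $P_{t+1}$ means the mean field update \eqref{eq:MFA} applied to $P_t$ with the chosen action $a$ in the role of $a_{t,l}$. This index shift relative to \eqref{eq:MFA} is what the statement intends.

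First, by Theorem~\ref{Thm:expected reward}, $L^j(a)=-\sum_{k\in N}(\eta^j_k+ca_k)$. Since $\Pi^*(P_t)$ is a probability distribution, linearity gives
\[
\E_{j\sim\Pi^*(P_t)}\big(L^j(a)\big)=-\sum_{k\in N}\Big(\E_{j\sim\Pi^*(P_t)}\big(\eta^j_k\big)+ca_k\Big).
\]
So it suffices to show that $\E_{j\sim\Pi^*(P_t)}(\eta^j_k)=P_{t+1,k}$ for each $k$.

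Second, I would interpret \eqref{eq:max entropy} probabilistically. Under $\Pi^*(P_t)$, the coordinates $x^j_1,\dots,x^j_n$ of the random state $x^j$ are independent Bernoulli random variables with means $P_{t,1},\dots,P_{t,n}$. This holds because the probability mass function factors into a product of the Bernoulli$(P_{t,l})$ mass functions, which is immediate from \eqref{eq:max entropy}. Now expand
\[
\eta^j_k=(1+a_k(\alpha-1))\Big(x^j_k+(1-x^j_k)-(1-x^j_k)\prod_{r\in D_k}(1-\rho_{rk}x^j_r)\Big).
\]
This is a sum of products in which every product involves pairwise distinct coordinates. The key point is that there are no self-edges, so $k\notin D_k$, and hence $x^j_k$ never multiplies itself. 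By independence, the expectation of each product is the product of the expectations. In particular,
\[
\E\Big[(1-x^j_k)\prod_{r\in D_k}(1-\rho_{rk}x^j_r)\Big]=(1-P_{t,k})\prod_{r\in D_k}(1-\rho_{rk}P_{t,r}).
\]
The remaining terms are linear and have expectations $P_{t,k}$ and $1-P_{t,k}$. Reassembling gives
\[
\E(\eta^j_k)=(1+(\alpha-1)a_k)\Big(P_{t,k}+(1-P_{t,k})\Big[1-\prod_{r\in D_k}(1-\rho_{rk}P_{t,r})\Big]\Big),
\]
which is exactly $P_{t+1,k}$ by \eqref{eq:MFA}.

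The main obstacle is justifying that the expectation passes inside the products. This step is exactly where the maximum entropy assumption is used: for a general $\Pi$ with the same marginals, correlations between $x^j_k$ and the neighbors $x^j_r$ would break the factorization. The hypotheses needed are the product form \eqref{eq:max entropy} and the absence of self-loops, which makes $\eta^j_k$ multilinear in the $x^j$'s. Finally, polynomial time follows because \eqref{eq:MFA} needs $O(|E|+n)$ operations to produce $P_{t+1}$, and the final sum then takes $O(n)$.
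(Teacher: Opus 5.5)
Your proposal is correct and follows essentially the same route as the paper: you reduce to computing $\E_{j\sim\Pi^*(P_t)}(\eta^j_k)$ via Theorem~\ref{Thm:expected reward} and linearity, then use the product form of $\Pi^*(P_t)$ and the absence of self-loops to factor the expectation of $(1-x^j_k)\prod_{r\in D_k}(1-\rho_{rk}x^j_r)$. The paper does this factorization by hand, splitting into $X^1_l$ and $X^0_l$ and summing out one coordinate at a time to reach $\prod_{r}(1-\rho_{rl}P_r)$; your observation that the coordinates are independent Bernoulli$(P_{t,l})$ variables gets the same result in one step.
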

\begin{proof}
For some $P\in[0,1]^n$ and $a\in\{0,1\}^n$, we begin with the expectation of $L^j(a)$ with respect to the induced distribution $\Pi^*(P)$ :
\begin{equation}\label{proof:eq:L max entropy pt1}
\begin{aligned}
    &\E_{j\sim \Pi^*(P)}(L^j(a))=\sum_{j\in X}\Pi^*_j(P)L^j(a)\\
    &=\sum_{j\in X}\bigg(\Pi^*_j(P)\big(-\sum_{l\in N}\eta^j_l+ca_l\big)\bigg)\\
    &=-\sum_{l\in N}ca_l-\sum_{j\in X}\bigg(\Pi^*_j(P)\sum_{l\in N}\eta^j_l\bigg),
\end{aligned}
\end{equation}
where the first equality is the definition of expectation, the second follows from \eqref{eq:L def}, and the third as $ca_k$ can be factored out of the sum over $X$, leaving only a sum over a distribution remaining.
For compactness, we now focus on the second term. 
Letting $\tilde{a}_l=(1+a_{t-1,l}(\alpha-1)$, $\bar{P}_l=1-P_l$ and $\beta^i_l=\prod_{k\in D_l}(1-\rho_{kl}x^i_k)$, we have:
\begin{equation}\label{proof:eq:L max entropy pt2}
\begin{aligned}
    &-\sum_{j\in X}\bigg(\Pi^*_j(P)\sum_{l\in N}\eta^j_l\bigg)\\
    &=-\sum_{j\in X}\sum_{l\in N}\bigg(\tilde{a}_l\bigg(x^j_l+\bar{x}^j_l(1-\beta_l^j)\bigg)\prod_{k\in N} ( P_k x^j_k+\bar{x}^j_k\bar{P}_k )\bigg) \\
    &=-\sum_{l\in N} \tilde{a}_l\sum_{j\in X}\bigg(x^j_l\prod_{k\in N} ( P_k x^j_k+\bar{x}^j_k\bar{P}_k )+ \\
    &\qquad\qquad\qquad\qquad\bar{x}^j_l(1-\beta_l^j)\prod_{k\in N} ( P_k x^j_k+\bar{x}^j_k\bar{P}_k )\bigg) \\
    &=-\sum_{l\in N} \tilde{a}_l\bigg(\sum_{j\in X}\big( x^j_l\prod_{k\in N} ( P_k x^j_k+\bar{x}^j_k\bar{P}_k )\big)+ \\
    &\qquad\qquad\qquad\qquad\sum_{j\in X}\big(\bar{x}^j_l(1-\beta_l^j)\prod_{k\in N} ( P_k x^j_k+\bar{x}^j_k\bar{P}_k )\big)\bigg) \\
    &=-\sum_{l\in N} \tilde{a}_l\bigg(\sum_{j\in X^1_l}\big( P_l\prod_{k\in N\setminus l} ( P_k x^j_k+\bar{x}^j_k\bar{P}_k )\big)+ \\
    &\qquad\qquad\sum_{j\in X^0_l}\big((1-P_l)(1-\beta_l^j)\prod_{k\in N\setminus l} ( P_k x^j_k+\bar{x}^j_k\bar{P}_k )\big)\bigg) \\
    &=-\sum_{l\in N} \tilde{a}_l\bigg(P_l+(1-P_l)\big(\sum_{j\in X^0_l}\prod_{k\in N\setminus l} ( P_k x^j_k+\bar{x}^j_k\bar{P}_k ) \\ \\
    &\qquad\qquad-\sum_{j\in X^0_l}\prod_{k\in D_l}(1-\rho_{kl}x^i_k)\prod_{k\in N\setminus l} ( P_k x^j_k+\bar{x}^j_k\bar{P}_k )\big)\bigg) \\
    &=-\sum_{l\in N} \tilde{a}_l\bigg(P_l+(1-P_l)\big(1-\sum_{j\in X^0_l}\prod_{k\in D_l}(1-\rho_{kl}x^j_k) \\
    &\qquad\qquad\prod_{k\in N\setminus l} ( P_k x^j_k+\bar{x}^j_k\bar{P}_k )\big)\bigg). \\
\end{aligned}
\end{equation}
Each step can be described as follows: the first equality expands the definitions of $\Pi^*_j(P)$ and $\eta^j_l$, The second equality flips the sims of $N$ and $X$, factors out $\tilde{a}$ and distributes the product over $N$,
third distributes the sum, the fourth equality leverages changes the sum to be over $X^1_l=\{i\in X\mid x^i_k=1\}$ and $X^0_l=\{i\in X\mid x^i_k=0\}$ which true because in the cases $x^j$ is not in the appropriate set the sum term will be zero, additionally we can factor out a $P_l$ and a $1-P_l$ respectively as the sum sets now guarentee that term appears in the product.
In the fifth equality, we factor out $P_l,1-P_l$ and recognize that in $P_l$'s case the remaining sum is over a distribution, so it is equal to 1, additionally we distribute the term $1-\beta^j_l$ across the product.
In the final equality, notice $\sum_{j\in X^0_l}\prod_{k\in N\setminus l} ( P_k x^j_k+\bar{x}^j_k\bar{P}_k )=1$ as it is a sum over a distribution.

Now we focus on the final sum term over $X^0_l$ for some $l\in N$,
\begin{equation}
\begin{aligned}
    &\sum_{j\in X^0_l}\bigg(\prod_{k\in D_l}(1-\rho_{kl}x^j_k)\prod_{k\in N\setminus l} ( P_k x^j_k+\bar{x}^j_k\bar{P}_k )\bigg) \\
    &= \sum_{j\in X^0_l}\bigg(\prod_{k\in N\setminus l}(1-\rho_{kl}x^j_k) ( P_k x^j_k+\bar{x}^j_k\bar{P}_k )\bigg)\\
    &= \sum_{j\in X^0_l}\bigg(\prod_{k\in N\setminus l}x^j_kP_k(1-\rho_{kl})+(1-x^j_k)(1-P_k) \bigg) \\
\end{aligned}
\end{equation}
Where, in the first equality we combine the product terms, which is permissible because $(k,l)\notin E$ implies that $\rho_{kl}=0$, and there are no  self-loops $\rho_{ll}=0$. The second is simply distributing the terms in the sum, and leveraging the binary nature of the variable's values.

Now, we let $Y=X^0_l$ and select some $r\in N\setminus\{l\}$ and define $Y^1_r=\{x^i\in X^0_l\mid x^i_r=1\}$ and similarly
$Y^0_r=\{x^i\in X^0_l\mid x^i_r=0\}$.
For space, let $f(j,k,l)=x^j_kP_k(1-\rho_{kl})+(1-x^j_k)(1-P_k)$ which leads to the following decomposition:
\begin{equation}
\begin{aligned}
    &\sum_{j\in X^0_l}\bigg(\prod_{k\in N\setminus l}x^j_kP_k(1-\rho_{kl})+(1-x^j_k)(1-P_k) \bigg) \\
    &=\sum_{j\in Y^1_r}\bigg(\prod_{k\in N\setminus l}x^j_kP_k(1-\rho_{kl})+(1-x^j_k)(1-P_k) \bigg) \\
    &\quad+\sum_{j\in Y^0_r}\bigg(\prod_{k\in N\setminus l}x^j_kP_k(1-\rho_{kl})+(1-x^j_k)(1-P_k) \bigg) \\
    &=(P_k(1-\rho_{kl}))\sum_{j\in Y^1_r}\bigg(\prod_{k\in N\setminus \{l,r\}}f(j,k,l) \bigg) \\
    &+(1-P_k)\sum_{j\in Y^0_r}\bigg(\prod_{k\in N\setminus \{l,r\}}f(j,k,l) \bigg) \\
    &=((P_r-P_r\rho_{rl})+1-P_r)\sum_{j\in Y^1_r} \\ 
    &\quad \bigg(\prod_{k\in N\setminus \{l,r\}}x^j_kP_k(1-\rho_{kl})+(1-x^j_k)(1-P_r) \bigg) \\
    &=(1-P_r\rho_{rl})\sum_{j\in Y^1_r}\bigg(\prod_{k\in N\setminus \{l,r\}}f(j,k,l) \bigg).
\end{aligned}
\end{equation}
Now, we can set $Z=Y^1_r$ and select some $r'\in N\setminus\{l,r\}$ and repeat the above procedure.
Once we repeat it for all elements in $N$, we obtain the equality
\begin{equation}
\begin{aligned}
    &\sum_{j\in X^0_l}\bigg(\prod_{k\in N\setminus l}x^j_kP_k(1-\rho_{kl})+(1-x^j_k)(1-P_k) \bigg) \\
    &=\prod_{r\in N}(1-\rho_{rl} P_r).
\end{aligned}
\end{equation}
This can be substituted back into \eqref{proof:eq:L max entropy pt2} to obtain
\begin{equation}
\begin{aligned}
&-\sum_{j\in X}\bigg(\Pi^*_j(P)\sum_{l\in N}\eta^j_l\bigg)\\
&=-\sum_{l\in N} \tilde{a}_l\bigg(P_l+(1-P_l)\big(1-\prod_{r\in N}(1-\rho_{rl} P_r)\bigg)\\
&=-\sum_{l\in N} P_{t+1,l}
\end{aligned}
\end{equation}
which can be substituted back into \eqref{proof:eq:L max entropy pt1}
\begin{equation}
    \sum_{j\in X}\Pi^*_j(P)L^j(a)=-\sum_{l\in N}(ca_l+P_{t+1,l})
\end{equation}
as desired.
\end{proof}
The main purpose of this approach is that computing \eqref{eq:MFA}) for all nodes $l\in N$ runs in the worst case in $\mathcal{O}(n^2)$ (this worst occurs occurs on a complete graph).
In practice, this means that computing \eqref{eq:MFA value approx} can be done quickly on large graphs.

Interestingly, this computation has direct connections to mean field analysis (E.g. requiring a direct computation of \eqref{eq:MFA}) equations that are often used to study epidemic spreading processes \cite{pare2020modeling,paarporn2017networked,nowzari2016analysis}.
In particular, the equations that govern $P_{t+1,l}$ are exactly the same equations used to approximate the spread of an epidemic throughout the network.
There, they overcome the same curse of dimensionality by changing the focus from a $2^n$ distribution $\Pi$, and instead analyzing a more manageable marginal distribution vector $P\in[0,1]^n$.
The main difference between these approaches is that in biological networks the nodes represent living beings that recover from the disease naturally over time.
In our model, we regard the recovery process as the control input, where network operators (i.e., defenders) choose which nodes of the network to invest resources in securing.

The strength of this result is that our estimation is exact in the special case where the distribution $\Pi$ has $\Pi=\Pi^*(P)$ for some $P$.
However, in practice, we can only tractably compute $P_t$, so we will evaluate the value function \eqref{eq:exact value} under the assumption that the true estimate distribution $\Pi_t=\Pi^*(P_t)$.
There are two issues with this estimate that can cause the heuristic evaluation to drift from the value function.
The first issue occurs even in the optimistic setting where we suppose that the initial true distribution $\Pi_0$ maximizes entropy for our initial estimate $P_0$ (that is, $\Pi_0=\Pi^*(P_0)$).
This is the best-case scenario because now Theorem~\ref{thm:expected value} applies directly and the sum term for $t=0$ is correctly calculated.
The issue is that in future time steps, we use action $a_t$ to generate the next estimate $P_1$ and perform the same calculation under the assumption that $\Pi_1 = \Pi^*(P_1)$, which is typically not true.
The true distribution $\Pi_1=M(a_0)\Pi_0$ has a lower entropy, which can be used by the value function to produce a more precise evaluation of the value function.
We term this mechanism causing an inaccurate estimation of the value function \textit{entropy drift,}
So, for all $t\in \{1,2,\dots \}$, it is true that the heuristic may have errors as a result of the true distribution $\Pi_t$ drifting away from $\Pi^*(P_t)$ over time.
The second issue is the assumption that we begin with an entropy maximizing distribution.
For example, in practical application it may often be true that the distribution $\Pi_0\neq \Pi^*(P_0)$, but $\Pi_0\in S(P_0)$, meaning the initial distribution is consistent with the information $P_0$, however, it is not the distribution that maximizes entropy.
We term the mechanism for inaccurate value function estimation \textit{misaligned initial entropy}.

With these notions in mind, we would like to judge in what scenarios the maximum entropy assumption is reasonable.
From the epidemics literature, it has been shown that this assumption is exact in the case of complete networks with the number of nodes taken to infinity \cite{armbruster2017elementary,armbruster2017elementary2}.
Intuitively, this justifies the idea graphs that are highly connected and symmetric graphs are likely stay close to the maximum entropy distributions.
Conversely, we hypothesize that sparsely connected graphs with little symmetry between nodes may deviate more significantly.
In the next subsection, we address this question numerically in terms of entropy drift and misaligned initial entropy.

\subsection{Empirical Evaluation of the Proposed Heuristic}

First, we will formalize our heuristic based on Theorem~\ref{thm:expected value}.
In short, the network operator will use $P_t$ as a running tracker of the security of their computer network given the action sequence $(\vec{a}=a_0,a_1,\dots,a_{T-1})$.
Then, we create a heuristic value function $\hat{V}\approx V$ by making one change to the value function $V$.
We apply Theorem~\ref{thm:expected value} and replace the term in the sum with the approximation $$\sum_{x^j\in X}(\Pi^j_t R(a_t)\odot M(a_t))^T  \vec{1}\approx -\sum_{l\in N}(P_{t+1,l}+ca_{t_l}) \bigg).$$
Doing both of these, we can give our tractable approximation of the value function:
 \begin{equation} \label{eq:approximate value}
     \hat{V}(\vec{a})=\sum^T_{t=0} \bigg( -\gamma^t\sum_{l\in N}(P_{t+1,l}+ca_{t,l}) \bigg).
 \end{equation}

In realistic scenarios, it is desirable to be able to evaluate the value function for any $T\in\mathbb{N}$.
Unfortunately, in the next series of experiments, we want to compare $V$ with $\hat{V}$.
Because $V$ can be extremely slow to compute even for relatively small values of $T<10$, we are limited to considering relatively small values of $T$.
The primary goal of our simulations is to evaluate the quality of the estimates of the value function produced by $\hat{V}_t$, with respect to the two mechanisms causing estimations inaccuracy, entropy drift and misaligned initial entropy.
To do this, we define a special initial distribution:
\begin{equation}
    \Pi(\sigma)=\sigma\Pi'+(1-\sigma)\Pi^*(X\Pi')
\end{equation}
where $\Pi'$ has $\Pi'_i=0.5$ and $\Pi'_j=0.5$ with $x^i=(1,0,1,0,\dots)$ and $x^j=(0,1,0,1,\dots)$.
This distribution $\Pi'$ has a simple interpretation, either all even-indexed nodes are compromised or all odd-indexed nodes are compromised with an equal chance.
The reason we use $\Pi(\sigma)$ as an initial distribution is that it allows us to parameterize the amount of entropy that the initial distribution has, while maintaining the same marginal distributions in $P_0$.
Specifically, for any $\sigma\in [0,1]$ we have $X\Pi(\sigma)=(0.5,0.5,0.5,\dots)=P_0$.
Further, $\Pi(0)=\Pi^*(X\Pi')=(\frac{1}{2^n},\frac{1}{2^n},\frac{1}{2^n},\dots)$ is the uniform distribution which also happens to maximize entropy in this case (matching the form given in \eqref{eq:max entropy}).
Intuitively, entropy measures the uncertainty of a distribution, so it is unsurprising to find that the uniform distribution is the entropy maximizer in this case.
Following that intuition, $\Pi'$ has a relatively low entropy, as there is little uncertainty between only two possible states.
It is straightforward to verify that $\Pi(\sigma)$ decreases in entropy as $\sigma$ increases, meaning that $\sigma$ parametrizes the amount of entropy in the distribution while simultaneously maintaining constant marginal distributions, which means that $\hat{V}$ will provide a constant value for any $\sigma$.
Thus, to evaluate the estimate quality due to entropy drift, we can take $\sigma=0$ giving the case where Theorem~\ref{thm:expected value} is satisfied.
To evaluate the impacts of misaligned initial entropy, we can take $\sigma>0$ and measure how strongly $V$ reacts to distributions with less entropy.

For the first experiment, we seek to address the question: how much does $\hat{V}$ differ from $V$ numerically as a function of entropy?
To do this, we sample $1,000$ controls at random from the control space and evaluate the subtraction $V(\vec{a})-\hat{V}(\vec{a})$ where $V$ is the finite time analog of \eqref{eq:exact value}, for $\sigma\in[0.0,0.25,0.5,0.75,1]$.
For this experiment, we use subgraphs of the graph originally studied in \cite{kazeminajafabadi2024optimal}, where we take the first $n$ nodes (taking attacker as node $0$, and taking all nodes as their notion of ``OR'' nodes), and take $n$ as large as computationally permissive.
Specifically, this graph is defined by $n=8$ \footnote{For brevity, we only include $n=8$, a more comprehensive set of plots, source code and data can be found on \href{https://github.com/descon-uccs/NetworkedAPTDefense}{GitHub}.} resulting in the infectiousness parameters are $\rho_{01}=0.65$, $\rho_{03}=0.6$, $\rho_{12}=0.7$, $\rho_{14}=0.6$, $\rho_{25}=0.6$, $\rho_{36}=0.55$, $\rho_{47}=0.7$, $\rho_{62}=0.7$ (where not listed assume $\rho_{kl}=0$).
For parameters, we use $\alpha=0.1$, $\gamma=0.5$, $c=2.1$, $T=4$.
\begin{figure}[h!]
    \centering
    \includegraphics[scale=0.5]{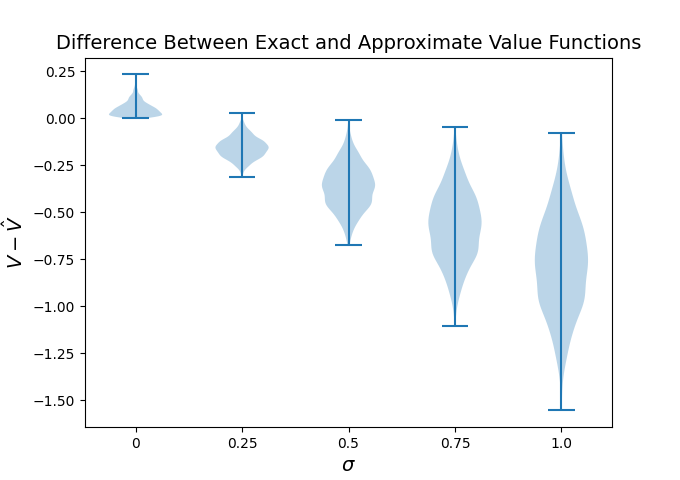}
    \caption{Comparision between the exact value function \eqref{eq:exact value} and the heuristic proposed heuristic value function \eqref{eq:approximate value}.
    \label{fig:PC}}
\end{figure}

The result of this experiment can be seen in Figure~\ref{fig:PC} as a \textit{violin plot} (the thickness of each violin indicates the relative frequency of the value of the $y$-axis in the data).
Each violin plot represents the difference between the exact and approximate values of the respective value functions.
As we move from left to right on the plot, $\sigma$ increases, which corresponds to entropy decreasing.
In the left-most violin, where $\sigma=0$, the assumptions of Theorem 2 are satisfied as $\Pi(0)$ is the entropy maximizing distribution given by \eqref{eq:max entropy}, giving an exact calculation of $V$ at $t=0$.
As previously discussed, the entropy of $\Pi_t$ tends to drift from $P_t$ resulting in some estimation error.
The left violin shows that this error is relatively minor, with the majority of samples grouping up around 0 error.
Interestingly, entropy drift seems to cause the approximate value function to underestimate the true value function, as all samples reported a positive difference.
This is in stark contrast to the remainder of the plots, where we introduce misaligned initial entropy.
Here we see errors much greater in magnitude, which almost universally tend to over predict the value function.
Overall, this plot demonstrates the intuition of the estimator: if the maximum entropy assumption holds in Theorem~\ref{thm:expected value}, then the errors are very small, and the more this assumption is violated in terms of entropy the worse the approximate value function performs.

The previous experiment directly measures how well $\hat{V}$ approximates $V$, however the reported values of a value function are somewhat arbitrary with respect to finding an optimal control.
When solving searching for an optimal control, it could be that a heuristic value function is a poor estimator of the true value function, but if it has the same local minimizers then optimizing it will have the same result as maximizing the true value function.
To this end, we seek to quantify how much our proposed approximate value function $\hat{V}$ distorts the \textit{relative ordering} of the control space compared to the ordering induced by the true value function $V$.
Formally, we define partial order $\geq_V$ such that for two controls $a,a'$, we say $a'\geq_Va$ whenever $V(a')\geq V(a)$.
In this experiment, we compare every pair of control sequences $\vec{a},\vec{a}' \in \{0,1\}^{T\times n}$ and count the occurrence where $\vec{a}'\geq_{V}\vec{a}$ but $\vec{a}'\ngeq_{\hat{V}}\vec{a}$.
We term this situation an \textit{order violation}, and we use the total fraction of order violation divided by the total number of pairs as a measure of how distorted the relative values of the control space are when evaluating with the approximate value function $\hat{V}$.
A relatively small fraction of order violations indicates that the control space is evaluated similarly, and any algorithm that evaluates a trajectory of controls searching for a (possibly local) optimum will be relatively unaffected by the use of an approximate value function.
For this experiment, we use the same parameters as before, except that we generate 100 Erd\H{o}s-R\'enyi graphs with 3 nodes and the probability of each pair of nodes having a directed edge is $0.5$.
Additionally, we set $\rho_{kl}=0.1$ for all edges $(k,l)\in E$.

\begin{figure}
    \centering
    \includegraphics[scale=0.27]{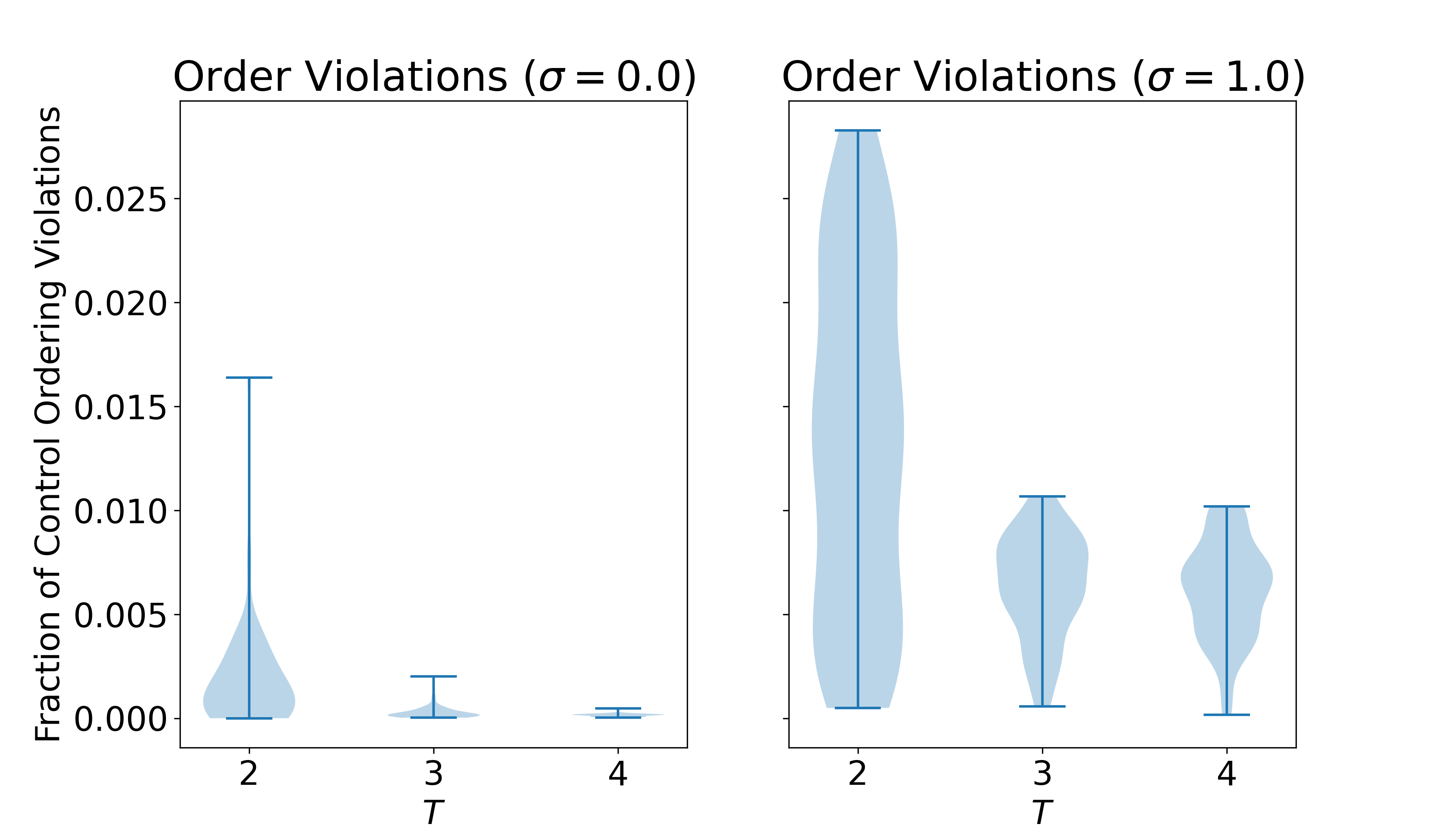}
    \caption{Frequency of Control Order Violations over 100 Erd\H{o}s-R\'enyi graphs. The thickness of a violin plot at a given $y$-axis value indicates the relative frequency that percentage of violations occurred among the sample graphs.}
    \label{fig:COER}
\end{figure}

The results of this experiment can be seen in Figure~\ref{fig:COER}, where we show the results across $T\in \{2,3,4\}$ for $\sigma=0$ (left) and $\sigma=1$ (right).
First, we can see that even in the worst case $\sigma=1$ and $T=2$, at most about 2.5\% of pairs are disordered.
This indicates that it will be relatively unlikely for an algorithm that searches over controls and compares them will encounter disordered pairs of controls frequently.
Second, we can see two trends in terms of $T$ and $\sigma$.
As we increase $T$, the number of controls pairs appears to grow faster than the number of disordered pairs.
Second, just like in the previous example, setting $\sigma=1$ significantly degrades the estimation quality which reflects the entropy assumption being violated.


\section{Conclusion}
In this work, we propose a tractable heuristic value function with strong theoretical connections to the emergent value function.
Numerically, we showed that our heuristic performs very well when the entropy assumptions it is based on are true.
However, it does see some performance degradation as those assumptions are violated.
For future work, we would like to investigate heuristic and tractable control approaches that take advantage of the heuristic value function posed in this work.
Due to the exponential size of the control space, future work should study methods that tractably search the control space.

\bibliographystyle{ieeetr}
\bibliography{references}

@inproceedings{braga2011optimal,
  title={Optimal state estimation for Boolean dynamical systems},
  author={Braga-Neto, Ulisses},
  booktitle={Asilomar Conference on Signals, Systems and Computers},
  year={2011},
  organization={IEEE}
}

@article{XuTDSC2012,
  author    = {S. Xu and W. Lu and Z. Zhan},
  title     = {A Stochastic Model of Multivirus Dynamics},
  journal   = {IEEE Transactions on Dependable and Secure Computing},
  volume    = {9},
  number    = {1},
  year      = {2012},
  pages     = {30-45},
}

@inproceedings{kazeminajafabadi2024optimal,
  title={Optimal detection for Bayesian attack graphs under uncertainty in monitoring and reimaging},
  author={Kazeminajafabadi, Armita and Ghoreishi, Seyede Fatemeh and Imani, Mahdi},
  booktitle={2023 American Control Conference (ACC)},
  year={2024}
}

@article{paarporn2017networked,
  title={Networked SIS epidemics with awareness},
  author={Paarporn, Keith and Eksin, Ceyhun and Weitz, Joshua S and Shamma, Jeff S},
  journal={IEEE Transactions on Computational Social Systems},
  volume={4},
  number={3},
  pages={93--103},
  year={2017},
  publisher={IEEE}
}

@article{kazeminajafabadi2024optimal_journal,
  title={Optimal Joint Defense and Monitoring for Networks Security under Uncertainty: A POMDP-Based Approach},
  author={Kazeminajafabadi, Armita and Imani, Mahdi},
  journal={IET Information Security},
  volume={2024},
  number={1},
  pages={7966713},
  year={2024},
  publisher={Wiley Online Library}
}

@article{armbruster2017elementary,
  title={Elementary proof of convergence to the mean-field model for the SIR process},
  author={Armbruster, Benjamin and Beck, Ekkehard},
  journal={Journal of mathematical biology},
  volume={75},
  pages={327--339},
  year={2017},
  publisher={Springer}
}

@article{armbruster2017elementary2,
  title={An elementary proof of convergence to the mean-field equations for an epidemic model},
  author={Armbruster, Benjamin and Beck, Ekkehard},
  journal={IMA Journal of Applied Mathematics},
  volume={82},
  number={1},
  pages={152--157},
  year={2017},
  publisher={Oxford University Press}
}

@article{nowzari2016analysis,
  title={Analysis and control of epidemics: A survey of spreading processes on complex networks},
  author={Nowzari, Cameron and Preciado, Victor M and Pappas, George J},
  journal={IEEE Control Systems Magazine},
  volume={36},
  number={1},
  pages={26--46},
  year={2016},
  publisher={IEEE}
}

@article{pare2020modeling,
  title={Modeling, estimation, and analysis of epidemics over networks: An overview},
  author={Par{\'e}, Philip E and Beck, Carolyn L and Ba{\c{s}}ar, Tamer},
  journal={Annual Reviews in Control},
  volume={50},
  pages={345--360},
  year={2020},
  publisher={Elsevier}
}

@incollection{XuBookChapterCD2019,
author={Shouhuai Xu},
title={Cybersecurity Dynamics: A Foundation for the Science of Cybersecurity},
booktitle={Proactive and Dynamic Network Defense},
publisher={Springer},
volume={74},
year={2019},
pages="1--31",
}

@INPROCEEDINGS{XuMTD2020, 
author={S. Xu}, 
booktitle={ACM Workshop on Moving Target Defense}, 
title={The Cybersecurity Dynamics Way of Thinking and Landscape (invited paper)}, 
year={2020}, 
}

@inproceedings{collins2025efficient,
  title={Efficient state estimation of a networked flipit model},
  author={Collins, Brandon and Gherna, Thomas and Paarporn, Keith and Xu, Shouhuai and Brown, Philip N},
  booktitle={2025 IEEE 64th Conference on Decision and Control (CDC)},
  pages={6808--6813},
  year={2025},
  organization={IEEE}
}

@article{al2024mitre,
  title={MITRE ATT\&CK: State of the art and way forward},
  author={Al-Sada, Bader and Sadighian, Alireza and Oligeri, Gabriele},
  journal={ACM Computing Surveys},
  volume={57},
  number={1},
  pages={1--37},
  year={2024},
  publisher={ACM New York, NY}
}

@article{senatereport2014,
  author      = {U.S. Senate-Committee on Commerce Science and Transportation},
  title       = {``Kill Chain'' Analysis of the 2013 Target Data Breach},
  institution = {United States Senate},
  address     = {Washington, DC},
  year        = {2014},
  url         = {https://www.public.navy.mil/spawar/Press/Documents/Publications/03.26.15_USSenate.pdf}
}

@article{papadimitriou1987complexity,
  title={The complexity of Markov decision processes},
  author={Papadimitriou, Christos H and Tsitsiklis, John N},
  journal={Mathematics of operations research},
  volume={12},
  number={3},
  pages={441--450},
  year={1987},
  publisher={INFORMS}
}

@article{hauskrecht2000value,
  title={Value-function approximations for partially observable Markov decision processes},
  author={Hauskrecht, Milos},
  journal={Journal of artificial intelligence research},
  volume={13},
  pages={33--94},
  year={2000}
}

\end{document}